\font\cyreight=wncyr8
\newcommand\N{{\mathbb N}}
\newcommand\Z{{\mathbb Z}}
\newcommand\R{{\mathbb R}}
\DeclareMathOperator\sym{Sym}
\newtheorem{algorithm}{Algorithm}
\newtheorem{assumption}{Standing assumption}
\newenvironment{fsa}[1][auto]{\begin{tikzpicture}[->,>=stealth',
    shorten >=1pt,auto,node distance=3cm,double distance between line centers=0.45ex,
    initial text=,accepting/.style=accepting by arrow,
    every state/.style={inner sep=3pt,minimum size=0pt},
    every loop/.style={looseness=12},semithick,#1]}{\end{tikzpicture}}
\begin{document}
\title{Algorithmic decidability of Engel's property for automaton groups}
\author{Laurent Bartholdi\thanks{Partially supported by ANR grant ANR-14-ACHN-0018-01}\inst1\inst2}
\institute{\'Ecole Normale Sup\'erieure, Paris. \email{laurent.bartholdi@ens.fr}\and Georg-August-Universit\"at zu G\"ottingen. \email{laurent.bartholdi@gmail.com}}
\maketitle
\begin{abstract}
  We consider decidability problems associated with Engel's identity
  ($[\cdots[[x,y],y],\dots,y]=1$ for a long enough commutator
  sequence) in groups generated by an automaton.

  We give a partial algorithm that decides, given $x,y$, whether an
  Engel identity is satisfied. It succeeds, importantly, in proving
  that Grigorchuk's $2$-group is not Engel.

  We consider next the problem of recognizing Engel elements, namely
  elements $y$ such that the map $x\mapsto[x,y]$ attracts to $\{1\}$.
  Although this problem seems intractable in general, we prove that it
  is decidable for Grigorchuk's group: Engel elements are precisely
  those of order at most $2$.

  Our computations were implemented using the package \textsc{Fr}
  within the computer algebra system \textsc{Gap}.
\end{abstract}

%%%%%%%%%%%%%%%%%%%%%%%%%%%%%%%%%%%%%%%%%%%%%%%%%%%%%%%%%%%%%%%%
\section{Introduction}
A \emph{law} in a group $G$ is a word $w=w(x_1,x_2,\dots,x_n)$ such
that $w(g_1,\dots,g_n)=1$ for all $g_1,\dots,g_n\in G$; for example,
commutative groups satisfy the law
$[x_1,x_2]=x_1^{-1}x_2^{-1}x_1x_2$. A \emph{variety} of groups is a
maximal class of groups satisfying a given law; e.g.\ the variety of
commutative groups (satisfying $[x_1,x_2]$) or of groups of exponent
$p$ (satisfying $x_1^p$).

Consider now a sequence $\mathscr W=(w_0,w_1,\dots)$ of words in $n$
letters. Say that $(g_1,\dots,g_n)$ \emph{almost satisfies}
$\mathscr W$ if $w_i(g_1,\dots,g_n)=1$ for all $i$ large enough, and
say that $G$ \emph{almost satisfies $\mathscr W$} if all $n$-tuples
from $G$ almost satisfy $\mathscr W$. For example, $G$ almost
satisfies $(x_1,\dots,x_1^{i!},\dots)$ if and only if $G$ is a torsion
group.

\begin{wrapfigure}[9]{r}{55mm}
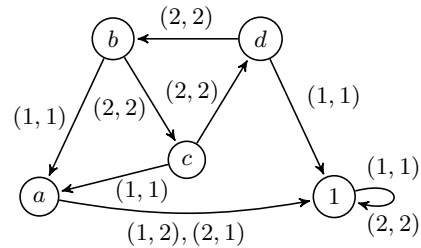

\small\vspace{-7ex}\begin{fsa}[scale=0.7]
  \node[state] (b) at (1.4,3) {$b$};
  \node[state] (d) at (4.2,3) {$d$};
  \node[state] (c) at (2.8,0.7) {$c$};
  \node[state] (a) at (0,0) {$a$};
  \node[state] (e) at (5.6,0) {$1$};
  \path (b) edge node[left,pos=0.6] {$(2,2)$} (c) edge node[left] {$(1,1)$} (a)
        (c) edge node[left,pos=0.6] {$(2,2)$} (d) edge node[below right=-1mm] {$(1,1)$} (a)
        (d) edge node[above] {$(2,2)$} (b) edge node {$(1,1)$} (e)
        (a) edge[bend right=10] node[below] {$(1,2),(2,1)$} (e)
        (e) edge[loop right] node[above=1mm] {$(1,1)$} node[below=1mm] {$(2,2)$} (e);
\end{fsa}\vspace{-2ex}
\caption{The Grigorchuk automaton}\label{fig:grigorchuk}
\end{wrapfigure}

The problem of deciding algorithmically whether a group belongs to a
given variety has received much attention~(see
e.g.~\cite{jackson:undecidable} and references therein); we consider
here the harder problems of determining whether a group (respectively
a tuple) almost satisfies a given sequence. This has, up to now, been
investigated mainly for the torsion sequence
above~\cite{godin-klimann-picantin:torsion-free}.

The first Grigorchuk group $G_0$ is an automaton group which appeared
prominently in group theory, for example as a finitely generated
infinite torsion group~\cite{grigorchuk:burnside} and as a group of
intermediate word growth~\cite{grigorchuk:growth};
see~\S\ref{ss:grigorchuk}. It is the group of automatic
transformations of $\{1,2\}^\infty$ generated by the five states of
the automaton from Figure~\ref{fig:grigorchuk}, with input and output
written as $(\textsf{in},\textsf{out})$.

The Engel law is
\[E_c=E_c(x,y)=[x,y,\dots,y]=[\cdots[[x,y],y],\dots,y]\]
with $c$ copies of `$y$'; so $E_0(x,y)=x$, $E_1(x,y)=[x,y]$ and
$E_c(x,y)=[E_{c-1}(x,y),y]$. See below for a motivation. Let us call a
group (respectively a pair of elements) \emph{Engel} if it almost
satisfies $\mathscr E=(E_0,E_1,\dots)$.  Furthermore, let us call
$h\in G$ an \emph{Engel} element if $(g,h)$ is Engel for all $g\in G$.

\noindent A concrete consequence of our investigations is:
\begin{theorem}\label{thm:1}
  The first Grigorchuk group $G_0$ is not Engel. Furthermore, an
  element $h\in G_0$ is Engel if and only if $h^2=1$.
\end{theorem}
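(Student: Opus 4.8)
The plan is to treat the two assertions together, deriving the first from the ``only if'' half of the second. Recall from \S\ref{ss:grigorchuk} the self-similar structure $a=\sigma$ (where $\sigma$ is the transposition of the two maximal subtrees), $b=(a,c)$, $c=(a,d)$, $d=(1,b)$, the relations $a^2=b^2=c^2=d^2=bcd=1$, and that $G_0$ is contracting with nucleus $\{1,a,b,c,d\}$. Three elementary facts will be used throughout. First, since $G_0^{\mathrm{ab}}\cong(\Z/2)^3$, every commutator lies in the first-level stabiliser $\operatorname{St}(1)$, so $E_c(g,h)\in\operatorname{St}(1)$ for all $c\ge1$. Second, for $x=(x_1,x_2)$ and $y=(y_1,y_2)$ in $\operatorname{St}(1)$ one has $[x,y]=([x_1,y_1],[x_2,y_2])$, so the whole Engel sequence is computed sectionwise. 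Third, Engel-ness is a $G_0$-conjugacy invariant, because $E_c(g,khk^{-1})=k\,E_c(k^{-1}gk,h)\,k^{-1}$. Granting the ``only if'' direction, the first assertion is immediate: $ad$ satisfies $(ad)^2=(b,b)\ne1$ and $(ad)^4=1$, hence has order $4$ and is not Engel, so $G_0$ is not Engel; the partial algorithm of the previous section moreover produces an explicit witnessing pair $(g,ad)$.

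For the ``if'' direction, let $h$ be an involution; I argue that the orbit of $x\mapsto[x,h]$ reaches $1$ for every starting point, by induction on a contraction-controlled complexity of the pair $(x,h)$. Since $E_1(g,h)\in\operatorname{St}(1)$, it suffices to iterate starting from some $x\in\operatorname{St}(1)$. If $h\in\operatorname{St}(1)$, write $h=(h_1,h_2)$; as $h^2=1$ the sections $h_1,h_2$ are again involutions, of strictly smaller length under contraction, and the sectionwise formula $E_c(x,h)=(E_c(x_1,h_1),E_c(x_2,h_2))$ lets me induct on both coordinates. If $h\notin\operatorname{St}(1)$, then $h=\sigma(h_1,h_2)$ with $h_1h_2=1$, i.e.\ $h_2=h_1^{-1}$; here I unfold $E_2(x,h)=[[x,h],h]$ for $x=(x_1,x_2)\in\operatorname{St}(1)$ and check that its two sections are commutators built from $x_1,x_2$ and the strictly shorter section $h_1$, so that after one or two steps the complexity drops and induction again applies. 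The base of the induction is a finite verification over pairs whose entries lie in the nucleus, carried out with \textsc{Fr}.

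For the ``only if'' direction I prove the contrapositive: if $h^2\ne1$ then some $g$ makes $E_c(g,h)\ne1$ for infinitely many $c$. When $h=(h_1,h_2)\in\operatorname{St}(1)$ has a section, say $h_1$, with $h_1^2\ne1$, induction provides $g_1$ witnessing that $h_1$ is not Engel; lifting $g_1$ to any $g\in\operatorname{St}(1)$ with first section $g_1$ (the projection $\operatorname{St}(1)\to G_0$ being onto) and applying the sectionwise formula, the first coordinate of $E_c(g,h)$ is nontrivial for infinitely many $c$. The essential case is $h=\sigma(h_1,h_2)\notin\operatorname{St}(1)$ with $h_1h_2\ne1$: here $h^2=(h_1h_2,h_2h_1)$ may itself be an involution, as already for $h=ad$, so one cannot simply pass to $h^2$. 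Instead I exhibit a witness directly, tracking the sections of $E_c(g,h)$ for a suitable $g$ and locating a self-similar cycle, i.e.\ a configuration that recurs along a subtree without ever passing through $1$. This recurrence is exactly the pattern detected by the partial algorithm, and it is what simultaneously establishes the first assertion.

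The main obstacle is this last case. Establishing that the section dynamics of $x\mapsto[x,h]$ genuinely cycles away from $1$ for \emph{every} non-involution, and not merely for prototypes such as $ad$ (order $4$), $ac$ (order $8$) and $ab$ (order $16$), requires two things: controlling that the contraction-based complexity strictly decreases even though the word length of successive commutators tends to grow, and reducing the a priori infinite family of non-involutions, via sections and the conjugacy invariance above, to finitely many types on which a terminating finite check can be run in \textsc{Gap} through \textsc{Fr}. Verifying the base cases of both inductions, namely the behaviour of the Engel map on pairs supported in the nucleus, is the computational heart of the whole argument.
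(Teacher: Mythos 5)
There are two genuine gaps, one in each direction. For the ``if'' direction, your induction on a ``contraction-controlled complexity of the pair $(x,h)$'' is not well-founded: under the iteration $x\mapsto[x,h]$ the first entry of the pair grows (you yourself note later that ``the word length of successive commutators tends to grow''), so the quantity you propose to induct on does not decrease, and the base case cannot be reduced to pairs supported in the nucleus since $x$ ranges over all of $G_0$. The paper needs none of this: for an involution $h$ one has $[g,h]^h=[g,h]^{-1}$, hence $E_{1+k}(g,h)=[g,h]^{(-2)^k}$, and Proposition~\ref{prop:torsion} (every element has $2$-power order) finishes the argument in one line.

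For the ``only if'' direction, the case you correctly identify as essential --- $h$ of order $2^e\ge4$, where passing to $h^2$ or to sections may land back on an involution --- is exactly the content of the theorem, and your proposal leaves it as an acknowledged ``obstacle'' rather than resolving it; moreover your proposed resolution (reducing non-involutions to finitely many conjugacy/section types and running a finite check) is not how a proof can go, since no such finite classification is established or even plausible a priori. The paper's actual mechanism is uniform in $h$: choose a level $X^n$ on which $h$ has an orbit $\{v_1,\dots,v_{2^e}\}$ of full length $2^e$, and use the weakly branched structure (Proposition~\ref{prop:branch}, $K\times K\le\psi(K)$) to plant the explicit quadruple $A_0\in K^4$ along that orbit, setting $g=\prod_i\bigl(v_i*A_{0,i\%4}^{h_i}\bigr)$ with correcting conjugators $h_i$ built from the states of $h$. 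Equation~\eqref{eq:twist} then shows that commutation with $h$ cyclically differences the planted tuple, so $E_c(g,h)=\prod_i\bigl(v_i*A_{c,i}^{h_i}\bigr)$ where $A_{c+1,i}=A_{c,i}^{-1}A_{c,i+1}$; the single computer-verified fact (Lemma~\ref{lem:calculation}) that $A_{9,i}$ is nontrivial and has $A_{0,i}$ as its state at $111112$ then forces $E_c(g,h)\ne1$ for all $c$. Without this orbit construction and the periodic self-reproducing tuple, your argument establishes non-Engelness only for whatever prototypes a direct computation happens to certify, not for every $h$ with $h^2\ne1$.
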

We prove a similar statement for another prominent example of automaton
group, the \emph{Gupta-Sidki group}, see Theorem~\ref{thm:gs}.

Theorem~\ref{thm:1} follows from a partial algorithm, giving a
criterion for an element $y$ to be Engel. This algorithm proves, in
fact, that the element $ad$ in the Grigorchuk group is not Engel. We
consider the following restricted situation, which is general as far
as the Engel property is concerned, see~\S\ref{ss:automata}: an
\emph{automaton group} is a group $G$ endowed with extra data, in
particular with a family of self-maps called \emph{states}, indexed by
a set $X$ and written $g\mapsto g@x$ for $x\in X$; it is
\emph{contracting} for the word metric $\|\cdot\|$ on $G$ if there are
constants $\eta<1$ and $C$ such that $\|g@x\|\le\eta\|g\|+C$ holds for
all $g\in G$ and all $x\in X$.  Our aim is to solve the following
\textbf{decision problems} in an automaton group $G$:
\begin{description}
\item[Engel($g,h$)] Given $g,h\in G$, does there exist $c\in\N$ with
  $E_c(g,h)$?
\item[Engel($h$)] Given $h\in G$, does Engel($g,h$) hold for all $g\in G$?
\end{description}

\noindent The algorithm is described in~\S\ref{ss:algo}. As a consequence,
\begin{corollary}
  Let $G$ be an automaton group acting on the set of binary sequences
  $\{1,2\}^*$, that is contracting with contraction coefficient
  $\eta<1$. Then, for torsion elements $h$ of order $2^e$ with
  $2^{2^e}\eta<1$, the property Engel($h$) is decidable.
\end{corollary}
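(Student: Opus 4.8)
The plan is to turn Engel($h$) into a finite reachability question. Since the property asks whether $E_c(g,h)=1$ for some $c$ for every $g\in G$, and since a tree automorphism is trivial iff it acts trivially on some finite level and all of its states there vanish, it suffices to prove that only finitely many group elements occur among the states
\[\mathcal S=\{E_c(g,h)@w:\ c\in\N,\ w\in\{1,2\}^*,\ g\in S\}\]
as $g$ ranges over a fixed finite generating set $S$. Granting finiteness of $\mathcal S$, the sequence $(E_c(g,h))_c$ attached to each $g$ is described by finitely many state data together with a level-action that eventually stabilises, so it is eventually periodic; hence whether it reaches $1$ is decidable, and carrying this out uniformly over the finitely many reachable configurations is exactly the machinery of~\S\ref{ss:algo}. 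I would first record the clean base case: when $g$ and $h$ both fix the first level, the wreath recursion gives $E_c(g,h)@x=E_c(g@x,h@x)$, so the Engel index $c$ is preserved while the first entry is contracted by $\eta$.

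The core is a state recursion for the Engel words that survives the non-trivial action of $g$ and $h$. Using commutator identities such as $[uv,h]=[u,h]^v[v,h]$ and the wreath decomposition, I would show that each state $E_c(g,h)@x$ rewrites as a product of boundedly many factors, each of the shape $E_{c'}(g@v,h@v)$ with $|v|=1$ and $c'\le c$, whose first entry is a child $g@v$ of $g$ (hence of norm $\le\eta\|g\|+C$) and whose second entry is a state $h@v$ of $h$ (hence of order dividing $2^e$, and ranging over the finite set of states of $h$). The number of surviving factors is controlled precisely because $h^{2^e}=1$: along any $h$-orbit met while descending, the accumulated product of $h$-states telescopes to a power of $h@v$ that is trivial, so only boundedly many genuine commutator pieces remain, and tracking this count is where the exponent $2^{2^e}$ enters. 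Iterating the recursion through $n$ levels, a deep state $E_c(g,h)@w$ with $|w|=n$ becomes a product of at most $(2^{2^e})^n$ factors $E_{c'}(g@v,h@v)$ with $|v|=n$; since $E_{c'}(1,\cdot)=1$, each factor's norm is governed by the norm of its first entry $g@v$, which is at most $\eta^n\|g\|+C/(1-\eta)$. Hence the deep-state norm is dominated by a geometric quantity of ratio $2^{2^e}\eta$, and as $2^{2^e}\eta<1$ it tends to $0$. Thus, for $n$ large enough and uniformly in $c$, all level-$n$ states of $E_c(g,h)$ vanish, so every $E_c(g,h)$ is determined by its action on one fixed finite level, $\mathcal S$ is finite, and the reduction of the first paragraph applies.

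The main obstacle is exactly the state-recursion lemma: controlling the number of surviving commutator factors after telescoping, uniformly in $c$, and showing that each factor's norm is genuinely governed by the (contracting) norm of its first entry rather than growing with the Engel index. This demands a careful case analysis of the first-level action of $g$ and $h$, extending the clean identity of the base case to every action pattern while using $h^{2^e}=1$ honestly to collapse the $h$-contributions and keep the factor count below the declared $2^{2^e}$ threshold. Once this expansion-versus-contraction balance is in place, the hypothesis $2^{2^e}\eta<1$ does all the remaining work, and the decision procedure reduces to the reachability analysis of~\S\ref{ss:algo} on the now-finite configuration set.
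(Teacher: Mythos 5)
There are two genuine gaps here, and either one is fatal on its own. First, you reduce Engel($h$) to the behaviour of $E_c(g,h)$ for $g$ ranging over a finite generating set $S$. But Engel($h$) quantifies over \emph{all} $g\in G$, and the set $\{g\in G:\text{Engel}(g,h)\}$ is not a subgroup in any obvious sense: from $[g_1g_2,h]=[g_1,h]^{g_2}[g_2,h]$ the iterated commutator with $h$ does not distribute over products, so controlling the generators controls nothing about arbitrary elements. The paper never makes this reduction; its graph $\Gamma_{n,R}$ has as vertices \emph{all} tuples in $B(R)^n$, and decidability comes from showing that every trajectory, whatever its (arbitrarily long) starting element $g$, eventually falls into this single finite graph and hence into one of its finitely many cycles, which can then be inspected.

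Second, your key ``state recursion lemma'' --- that $E_c(g,h)@x$ is a product of boundedly many factors $E_{c'}(g@v,h@v)$ --- is not only left unproved but is structurally wrong. Already for $c=1$, if $h$ moves the letter $x$, then $[g,h]@x=(g^{-1}h^{-1}gh)@x$ unwinds to an expression of the form $(g@x)^{-1}\,(g@x'')^{u}\,u^{-1}w$ with $x''=x^{h^{-1}}$ and $u,w$ states of $h$: a \emph{twisted difference of two states of $g$}, not a commutator of one state of $g$ with one state of $h$. The object that does satisfy a clean recursion is the $n$-tuple $(g,g^h,\dots,g^{h^{n-1}})$ with $n=2^e$ the order of $h$: one Engel step is the cyclic difference operator $(g_i)\mapsto(g_i^{-1}g_{i+1})$, after at most $n$ such steps every coordinate acts trivially on $X$ (since $(S-1)^n=0$ over $\Z_{/p}$), and only then may one pass to states coordinatewise --- this is exactly where the constant $2^n=2^{2^e}$ comes from ($n$ length-doubling steps per contraction step). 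Finally, even on its own terms your norm estimate fails: a product of $(2^{2^e})^n$ factors each of norm at most $\eta^n\|g\|+C/(1-\eta)$ has norm at most $(2^{2^e}\eta)^n\|g\|+(2^{2^e})^nC/(1-\eta)$, and the second term diverges. The correct conclusion (and the paper's) is not that the states tend to $1$ --- that would make every such $h$ Engel, contradicting Theorem~\ref{thm:1} --- but only that the trajectory enters and remains in the ball of radius $2^nC/(1-2^n\eta)$, after which one must still examine the cycles of the resulting finite graph to decide the answer.
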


The Engel property attracted attention for its relation to nilpotency:
indeed a nilpotent group of class $c$ satisfies $E_c$, and conversely
among compact~\cite{medvedev:engel} and
solvable~\cite{gruenberg:engel} groups, if a group satisfies $E_c$ for
some $c$ then it is locally nilpotent. Conjecturally, there are
non-locally nilpotent groups satisfying $E_c$ for some $c$, but this
is still unknown. It is also an example of iterated identity,
see~\cites{erschler:iteratedidentities,bandman+:dynamics}. In
particular, the main result of~\cite{bandman+:dynamics} implies easily
that the Engel property is decidable in algebraic groups.

It is comparatively easy to prove that the first Grigorchuk group $G_0$
satisfies no law~\cites{abert:nonfree,leonov:identities}; this result
holds for a large class of automaton groups. In fact, if a group
satisfies a law, then so does its profinite completion. In the class
mentioned above, the profinite completion contains abstract free
subgroups, precluding the existence of a law. No such arguments would
help for the Engel property: the restricted product of all finite
nilpotent groups is Engel, but the unrestricted product again contains
free subgroups. This is one of the difficulties in dealing with
iterated identities rather than identities.

If $\mathfrak A$ is a nil algebra (namely, for every $a\in\mathfrak A$
there exists $n\in\N$ with $a^n=0$) then the set of elements of the
form $\{1+a:a\in\mathfrak A\}$ forms a group $1+\mathfrak A$ under the
law $(1+a)(1+b)=1+(a+b+ab)$. If $\mathfrak A$ is defined over a field
of characteristic $p$, then $1+\mathfrak A$ is a torsion group since
$(1+a)^{p^n}=1$ if $a^{p^n}=0$. Golod constructed
in~\cite{golod:burnside} non-nilpotent nil algebras $\mathfrak A$ all
of whose $2$-generated subalgebras are nilpotent (namely,
$\mathfrak A^n=0$ for some $n\in\N$); given such an $\mathfrak A$, the
group $1+\mathfrak A$ is Engel but not locally nilpotent.

Golod introduced these algebras as means of obtaining infinite,
finitely generated, residually finite (every non-trivial element in
the group has a non-trivial image in some finite quotient), torsion
groups. Golod's construction is highly non-explicit, in contrast with
Grigorchuk's group for which much can be derived from the automaton's
properties.

It is therefore of high interest to find explicit examples of Engel
groups that are not locally nilpotent, and the methods and algorithms
presented here are a step in this direction.

An important feature of automaton groups is their amenability to
computer experiments, and even as in this case of rigorous
verification of mathematical assertions;
see also~\cite{klimann-mairesse-picantin:computations}, and the numerous
decidability and undecidability of the finiteness property
in~\cites{akhavi-klimann-lombardy-mairesse-picantin:finiteness,gillibert:finiteness,klimann:finiteness}.

The proof of Theorem~\ref{thm:1} relies on a computer calculation. It could
be checked by hand, at the cost of quite unrewarding effort. One of the
purposes of this article is, precisely, to promote the use of computers
in solving general questions in group theory: the calculations performed,
and the computer search involved, are easy from the point of view of a
computer but intractable from the point of view of a human.

The calculations were performed using the author's group theory
package \textsc{Fr}, specially written to manipulate automaton
groups. This package integrates with the computer algebra system
\textsc{Gap}~\cite{gap4:manual}, and is freely available from the
\textsc{Gap} distribution site
\[\verb+http://www.gap-system.org+\]

%%%%%%%%%%%%%%%%%%%%%%%%%%%%%%%%%%%%%%%%%%%%%%%%%%%%%%%%%%%%%%%%
\section{Automaton groups}\label{ss:automata}
An \emph{automaton group} is a finitely generated group associated with a
Mealy automaton. We define a \emph{Mealy automaton} $\mathscr M$ as a graph
such as that in Figure~\ref{fig:grigorchuk}. It has a set of \emph{states}
$Q$ and an \emph{alphabet} $X$, and there are \emph{transitions} between
states with \emph{input} and \emph{output} labels in $X$, with the
condition that, at every state, all labels appear exactly once as input and
once as output on the set of outgoing transitions.

Every state $q\in Q$ of $\mathscr M$ defines a transformation, written as
exponentiation by $q$, of the set of words $X^*$ by the following rule:
given a word $x_1\dots x_n\in X^*$, there exists a unique path in the
automaton starting at $q$ and with input labels $x_1,\dots,x_n$. Let
$y_1,\dots,y_n$ be its corresponding output labels. Then declare $(x_1\dots
x_n)^q=y_1\dots y_n$.

The action may also be defined recursively as follows: if there is a
transition from $q\in Q$ to $r\in Q$ with input label $x_1\in X$ and
output label $y_1\in X$, then
$(x_1 x_2\dots x_n)^q=y_1\,(x_2\dots,x_n)^r$.

By the \emph{automaton group} associated with the automaton $\mathscr M$,
we mean the group $G$ of transformations of $X^*$ generated by $\mathscr
M$'s states. Note that all elements of $G$ admit descriptions by automata;
namely, a word of length $n$ in $G$'s generators is the transformation
associated with a state in the $n$-fold product of the automaton of
$G$. See~\cite{gecseg-c:ata} for the abstract theory of automata,
and~\cite{gecseg:products} for products more specifically.

The structure of the automaton $\mathscr M$ may be encoded in an injective
group homomorphism $\psi\colon G\to G^X\rtimes\sym(X)$ from $G$ to the
group of \emph{$G$-decorated permutations of $X$}. This last group --- the
\emph{wreath product} of $G$ with $\sym(X)$ --- is the group of
permutations of $X$, with labels in $G$ on each arrow of the permutation;
the labels multiply as the permutations are composed. The construction of
$\psi$ is as follows: consider $q\in Q$. For each $x\in X$, let $q@x$
denote the endpoint of the transition starting at $q$ with input label $x$,
and let $x^\pi$ denote the output label of the same transition; thus every
transition in the Mealy automaton gives rise to
\[\begin{fsa}[scale=0.8]
  \node[state,minimum size=8mm] (q) at (0,0) {$q$};
  \node[state,inner sep=0pt,minimum size=8mm] (r) at (4,0) {$q@x$};
  \path (q) edge node[above] {$(x,x^\pi)$} (r);
\end{fsa}.
\]
The transformation $\pi$ is a permutation of $X$, and we set
\[\psi(q)=\langle x\mapsto q@x\rangle\pi,\]
namely the permutation $\varepsilon$ with decoration $q@x$ on the
arrow from $x$ to $x^\pi$.

We generalize the notation $q@x$ to arbitrary words and group
elements. Consider a word $v\in X^*$ and an element $g\in G$; denote
by $v^g$ the image of $v$ under $g$. There is then a unique element of
$G$, written $g@v$, with the property
\[(v\,w)^g=(v^g)\,(w)^{g@v}\text{ for all }w\in X^*.\]
We call by extension this element $g@v$ the \emph{state} of $g$ at
$v$; it is the state, in the Mealy automaton defining $g$, that is
reached from $g$ by following the path $v$ as input; thus in the
Grigorchuk automaton $b@1=a$ and $b@222=b$. There is a reverse
construction: by $v*g$ we denote the permutation of $X^*$ defined by
\[(v\,w)^{v*g}=v\,w^g,\qquad w^{v*g}=w\text{ if $w$ does not start with $v$}.
\]
Given a word $w=w_1\dots w_n\in X^*$ and a Mealy automaton $\mathscr M$ of
which $g$ is a state, it is easy to construct a Mealy automaton of which
$w*g$ is a state: add a path of length $n$ to $\mathscr M$, with input and
output $(w_1,w_2),\dots,(w_n,w_n)$ along the path, and ending at
$g$. Complete the automaton with transitions to the identity element. Then
the first vertex of the path defines the transformation $w*g$. For example,
here is $12*d$ in the Grigorchuk automaton:
\[\begin{fsa}[scale=0.8,every state/.style={minimum size=4.5ex}]
  \node[state] (b) at (1.4,3) {$b$};
  \node[state] (d) at (4.2,3) {$d$};
  \node[state,ellipse] (2d) at (7.0,3) {$2*d$};
  \node[state,ellipse] (12d) at (9.8,3) {$12*d$};
  \node[state] (c) at (2.8,0.7) {$c$};
  \node[state] (a) at (0,0) {$a$};
  \node[state] (e) at (5.6,0) {$1$};
  \path (b) edge node[left,pos=0.66] {$(2,2)$} (c) edge node[left] {$(1,1)$} (a)
        (c) edge node[left,pos=0.66] {$(2,2)$} (d) edge node {$(1,1)$} (a)
        (d) edge node[above] {$(2,2)$} (b) edge node[right,pos=0.33] {$(1,1)$} (e)
        (a) edge[bend right=30] node[below] {$(1,2),(2,1)$} (e)
        (e) edge[loop right] node[below right] {$(1,1),(2,2)$} (e)
        (12d) edge node[above] {$(1,1)$} (2d) edge node [pos=0.33] {$(2,2)$} (e)
        (2d) edge node[above] {$(2,2)$} (d) edge node [right,pos=0.33] {$(1,1)$} (e);
      \end{fsa}
\]
Note the simple identities $(g@v_1)@v_2=g@(v_1v_2)$,
$(v_1v_2)*g=v_1*(v_2*g)$, and $(v*g)@v=g$. Recall that we write
conjugation in $G$ as $g^h=h^{-1}gh$. For any $h\in G$ we have
\begin{equation}\label{eq:twist}
  (v*g)^h=v^h*(g^{h@v}).
\end{equation}

An automaton group is called \emph{regular weakly branched} if there
exists a non-trivial subgroup $K$ of $G$ such that $\psi(K)$ contains
$K^X$. In other words, for every $k\in K$ and every $v\in X^*$, the
element $v*k$ also belongs to $K$, and therefore to $G$. Ab\'ert
proved in~\cite{abert:nonfree} that regular weakly branched groups
satisfy no law.

In this text, we concentrate on the Engel property, which is
equivalent to nilpotency for finite groups. In particular, if an
automaton group $G$ is to have a chance of being Engel, then its image
under the map $G\to G^X\rtimes\sym(X)\to\sym(X)$ should be a nilpotent
subgroup of $\sym(X)$. Since finite nilpotent groups are direct
products of their $p$-Sylow subgroups, we may reduce to the case in
which the image of $G$ in $\sym(X)$ is a $p$-group. A further
reduction lets us assume that the image of $G$ is an abelian subgroup
of $\sym(X)$ of prime order. We therefore make the following
\begin{assumption}
  The alphabet is $X=\{1,\dots,p\}$, and automaton groups are defined by
  embeddings $\psi\colon G\to G^p\rtimes\Z_{/p}$, with $\Z_{/p}$ the cyclic
  subgroup of $\sym(X)$ generated by the cycle $(1,2,\dots,p)$.
\end{assumption}
\noindent This is the situation considered in the Introduction.

We make a further reduction in that we only consider the Engel
property for elements of finite order. This is not a very strong
restriction: given $h$ of infinite order, one can usually find an
element $g\in G$ such that the conjugates $\{g^{h^n}:n\in\Z\}$ are
independent, and it then follows that $h$ is not Engel. This will be
part of a later article.

\subsection{Grigorchuk's first group}\label{ss:grigorchuk}
This section is not an introduction to Grigorchuk's first group, but
rather a brief description of it with all information vital for the
calculation in~\S\ref{ss:proof}. For more details, see
e.g.~\cite{bartholdi-g-s:bg}.

Fix the alphabet $X=\{1,2\}$. The first Grigorchuk group $G_0$ is a
permutation group of the set of words $X^*$, generated by the four
non-trivial states $a,b,c,d$ of the automaton given in
Figure~\ref{fig:grigorchuk}. Alternatively, the transformations
$a,b,c,d$ may be defined recursively as follows:
\begin{equation}\label{eq:grigorchuk}
\begin{alignedat}{2}
  (1x_2\dots x_n)^a&=2x_2\dots x_n, &\qquad (2x_2\dots x_n)^a&=1x_2\dots x_n,\\
  (1x_2\dots x_n)^b&=1a(x_2\dots x_n), &\qquad (2x_2\dots x_n)^b&=2c(x_2\dots x_n),\\
  (1x_2\dots x_n)^c&=1a(x_2\dots x_n), &\qquad (2x_2\dots x_n)^c&=2d(x_2\dots x_n),\\
  (1x_2\dots x_n)^d&=1x_2\dots x_n, &\qquad (2x_2\dots x_n)^d&=2b(x_2\dots x_n)
\end{alignedat}
\end{equation}
which directly follow from $d@1=1$, $d@2=b$, etc.

It is remarkable that most properties of $G_0$ derive from a careful
study of the automaton (or equivalently this action), usually using
inductive arguments. For example,
\begin{proposition}[\cite{grigorchuk:burnside}]\label{prop:torsion}
  The group $G_0$ is infinite, and all its elements have order a power of $2$.
\end{proposition}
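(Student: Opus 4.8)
\subsection*{Proof proposal}

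The plan is to derive everything from the recursion~\eqref{eq:grigorchuk}, equivalently from the state values $b@1=a$, $b@2=c$, $c@1=a$, $c@2=d$, $d@1=1$, $d@2=b$, together with $a@1=a@2=1$ and the fact that $a$ acts as the transposition of $X=\{1,2\}$. First I would record the relations $a^2=b^2=c^2=d^2=1$ and the Klein-four relations $bc=cb=d$, $cd=dc=b$, $bd=db=c$; each is an identity of transformations of $X^*$ that I would check by induction on word length, verifying it on the first letter and reducing to the same identity at the state. These relations yield a normal form: every $g\in G_0$ is represented by a reduced word in which occurrences of $a$ alternate with letters from $\{b,c,d\}$. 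Write $H$ for the stabilizer of the first letter, i.e.\ the kernel of $G_0\to\sym(X)$; since $a\notin H$ while $b,c,d\in H$, the subgroup $H$ has index $2$, and $g\in H$ precisely when its reduced word contains an even number of $a$'s. For $g\in H$ I abbreviate $g_0:=g@1$ and $g_1:=g@2$, so that $\psi(g)=(g_0,g_1)$.

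To see that $G_0$ is infinite I would exhibit a surjection of the proper subgroup $H$ onto all of $G_0$. The second-section map $\phi\colon H\to G_0$, $\phi(g)=g@2$, is a homomorphism because $g\in H$ fixes the letter $2$, whence $(gh)@2=(g@2)(h@2)$. It is onto: from $d,b,c\in H$ one gets $\phi(d)=b$, $\phi(b)=c$, $\phi(c)=d$, and since $aba\in H$ with $\psi(aba)=(c,a)$ one gets $\phi(aba)=a$; thus all generators $a,b,c,d$ lie in the image. If $G_0$ were finite, then $H$ would have order $|G_0|/2<|G_0|$ and could not surject onto $G_0$, a contradiction. Hence $G_0$ is infinite.

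For the torsion statement I would argue by strong induction on the reduced length $\|g\|$ that $g$ has order a power of $2$, using that this conclusion propagates automatically through $\psi$. If $g\in H$ then $\psi(g)=(g_0,g_1)$ and, as $\psi$ is injective, the order of $g$ is the least common multiple of the orders of $g_0$ and $g_1$, again a power of $2$ once the inductive hypothesis applies; and here the hypothesis does apply, because $g_0$ and $g_1$ are each products of at most $k$ unit-length sections, where $k$ counts the $\{b,c,d\}$-syllables, so $\|g_0\|,\|g_1\|\le k\le (\|g\|+1)/2<\|g\|$ whenever $\|g\|\ge2$. If instead $g\notin H$ then $g^2\in H$ with $\psi(g^2)=(g_0g_1,\,g_1g_0)$, where $\psi(g)=(g_0,g_1)\pi$; the two sections $g_0g_1$ and $g_1g_0=g_0^{-1}(g_0g_1)g_0$ are conjugate, so $g^2$ has the order of the single element $g_0g_1$, and the order of $g$ is this or twice this, in either case a power of $2$. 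The base cases $\|g\|\le1$, with $g\in\{1,a,b,c,d\}$, have order $1$ or $2$.

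The main obstacle is precisely the length control in the case $g\notin H$: one has $\|g_0g_1\|\le\|g_0\|+\|g_1\|\le\|g\|+1$, so a naive step may fail to shorten. I would resolve this in two layers. First, since order is a conjugacy invariant, I may replace $g$ by a conjugate of minimal length; cyclic reduction through the Klein-four relations then rules out the offending shape in which the reduced word both begins and ends with a $\{b,c,d\}$-syllable (conjugating by the leading syllable strictly shortens such a word). Second, genuine contraction is forced by the asymmetry $d@1=1$: a section of $d$ is trivial, so each occurrence of $d$ in a reduced word costs length when passing to sections, and because $b@2=c$ and $c@2=d$ the letter $d$ is produced within boundedly many section steps even when absent initially. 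This is the content of the contraction estimate $\|g@x\|\le\tfrac34\|g\|+C$ for the Grigorchuk group; granting it, the lengths of iterated sections eventually drop below any fixed bound, so the recursion terminates and the strong induction closes. Establishing this contraction cleanly, with the cyclic-reduction step handling the boundary syllables, is the delicate heart of the argument.
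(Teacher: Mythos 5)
The paper itself gives no proof of this proposition --- it is quoted from Grigorchuk's original article --- so I can only measure your attempt against the classical argument, which is the one you are reconstructing. Your proof of infiniteness is complete and correct: the second-section map $\phi\colon H\to G_0$, $\phi(g)=g@2$, is a surjective homomorphism from a proper index-$2$ subgroup, which is impossible for a finite group, and your verification that $a,b,c,d$ all lie in the image is right. The setup of the torsion induction (the Klein-four relations, the alternating normal form, injectivity of $\psi$, the reduction to cyclically reduced words, and the case $g\in H$ with $\|g@x\|\le(\|g\|+1)/2$) is also the standard one and is fine.

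The gap is exactly where you locate it, and the repair you propose does not work. In the case $g\notin H$ you must pass from $g$ to a section of $g^2$, and squaring doubles the length before sectioning halves it: from a generic bound $\|h@x\|\le\eta\|h\|+C$ applied to $h=g^2$ you only get $\|(g^2)@x\|\le 2\eta\|g\|+C$, which with $\eta=1/2$ (the correct coefficient for $G_0$; your $3/4$ is worse) gives $\|g\|+C$ --- no decrease at all. So ``granting the contraction estimate, the lengths of iterated sections eventually drop below any fixed bound'' is a non sequitur for this interleaved recursion, and even if lengths did stabilise below some constant $C/(1-\eta)$ you would still have to handle all words of that bounded length, which returns you to the same problem. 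The classical proof replaces the black-box contraction estimate by three specific facts about a cyclically reduced $g=ax_1ax_2\cdots ax_k\notin H$ (so $k$ is odd and $\|g\|=2k$): (i) across the two copies of $g$ inside $g^2$, each $x_i$ occurs once in an $a({\cdot})a$-block and once bare, so each coordinate of $\psi(g^2)$ is a product of the $2k$ letters $x_i@1,x_i@2$; hence $\|(g^2)@x\|\le 2k=\|g\|$ always, with \emph{strict} inequality as soon as some $x_i=d$, because $d@1=1$; (ii) if no $x_i$ is $d$ but some $x_i$ is $c$, the sections of $g^2$ contain a $d$, since $c@2=d$; (iii) if every $x_i$ is $b$, they contain a $c$, since $b@2=c$. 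One then inducts on the pair $(\text{length},\text{type})$ ordered lexicographically, where the type records whether $d$, else $c$, else only $b$ occurs. This bookkeeping --- which you explicitly defer as ``the delicate heart'' --- is precisely the content of the proposition's hard half, so as written the torsion statement is not yet proven.
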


\noindent The self-similar nature of $G_0$ is made apparent in the following manner:
\begin{proposition}[\cite{bartholdi-g:parabolic}*{\S4}]\label{prop:branch}
  Define $x=[a,b]$ and $K=\langle x,x^{c},x^{ca}\rangle$. Then $K$ is
  a normal subgroup of $G_0$ of index $16$, and $\psi(K)$ contains
  $K\times K$.

  In other words, for every $g\in K$ and every $v\in X^*$ the element
  $v*g$ belongs to $G_0$.
\end{proposition}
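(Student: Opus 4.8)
The plan is to work inside the wreath recursion and establish the three assertions --- normality, index $16$, and $\psi(K)\supseteq K\times K$ --- by first pinning down the images of the generators of $K$ under $\psi$ and then isolating the two coordinates.

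First I would read off from \eqref{eq:grigorchuk} the recursions $\psi(a)=\langle 1,1\rangle(1\,2)$, $\psi(b)=\langle a,c\rangle$, $\psi(c)=\langle a,d\rangle$, $\psi(d)=\langle 1,b\rangle$. Since $a,b$ are involutions (Proposition~\ref{prop:torsion}), $x=[a,b]=(ab)^2$, and a direct computation gives $\psi(ab)=\langle c,a\rangle(1\,2)$, hence $\psi(x)=\langle ca,ac\rangle$. Writing $t=ac$ (so that $ca=t^{-1}$), conjugating coordinatewise by $\psi(c)=\langle a,d\rangle$ and swapping coordinates by $\psi(a)$ yields
\[\psi(x)=\langle t^{-1},t\rangle,\qquad \psi(x^c)=\langle t,t^d\rangle,\qquad \psi(x^{ca})=\langle t^d,t\rangle.\]
In particular all three generators lie in the first-level stabilizer $L=\mathrm{Stab}(1)$, of index $2$ in $G_0$, so $K\le L$ and on $L$ the map $\psi$ is simply $g\mapsto\langle g@1,g@2\rangle$.

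The core of the branching statement is then two explicit identities. Using the defining relation $cd=b$ one has $acd=ab$, so $t\cdot t^d=(ac)(ac)^d=(acd)^2=(ab)^2=x$; combining this with $ca=t^{-1}$ gives
\[\psi(x^{-1}x^{ca})=\langle t\cdot t^d,\,t^{-1}t\rangle=\langle x,1\rangle,\qquad \psi(x\,x^c)=\langle t^{-1}t,\,t\cdot t^d\rangle=\langle 1,x\rangle,\]
so that $\langle x,1\rangle,\langle 1,x\rangle\in\psi(K)$. Granting that $K$ is normal, $\psi(K)$ is normal in $\psi(G_0)$, and conjugating $\langle x,1\rangle$ by $\psi(g)$ for $g\in L$ acts coordinatewise, producing $\langle x^{g@1},1\rangle$. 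Since $b@1=a$, $b^a@1=c$ and $d^a@1=b$, the projection $\{g@1:g\in L\}$ is all of $G_0$ (fractalness), so $\langle x^u,1\rangle\in\psi(K)$ for every $u\in G_0$; as $K$ is the normal closure of $x$, this gives $K\times 1\subseteq\psi(K)$, and symmetrically $1\times K\subseteq\psi(K)$, whence $K\times K\subseteq\psi(K)$.

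It remains to prove normality and to compute the index, and this is where I expect the real work. For normality I would check that the generating set is closed under conjugation by $a,b,c,d$; here $x^a=x^b=x^{-1}$ are immediate, and combining $x^b=x^{-1}$ with $bc=d$, $cd=b$ shows that all conjugates of $x$ and of $x^c$ already lie in $K$, but the remaining conjugates $x^{cac}$ and $x^{cab}$ of $x^{ca}$ genuinely require the deeper relations of $G_0$ produced by the recursion \eqref{eq:grigorchuk}. For the index, modulo $K$ the generators satisfy $a^2=b^2=c^2=d^2=1$, $bcd=1$ and $[a,b]=1$ (so that the image of $b$ is central), which forces $|G_0/K|\le 16$; a matching lower bound comes from exhibiting an order-$16$ quotient of $G_0$ --- the action on an initial segment of the tree --- in which $[a,b]$ becomes trivial. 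The main obstacle is precisely the interplay between these two facts: the conjugation-closure of the generating set and the non-collapse of the quotient both hinge on controlling the relations of $G_0$, so in practice I would establish normality, the index, and the inclusion $K\times K\subseteq\psi(K)$ together by a single induction along the levels of the tree, using the self-similarity $\psi$ as the inductive step.
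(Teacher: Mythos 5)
The paper does not prove this proposition at all --- it is quoted from \cite{bartholdi-g:parabolic}*{\S4} --- so there is no in-house argument to compare against; I am judging your proposal on its own. The part you do in detail is correct and is essentially the standard argument for the branching inclusion: the recursions $\psi(x)=\langle ca,ac\rangle$, $\psi(x^c)=\langle t,t^d\rangle$, $\psi(x^{ca})=\langle t^d,t\rangle$ with $t=ac$, the identity $t\,t^d=(acd)^2=(ab)^2=x$ (using $cd=b$), hence $\psi(x^{-1}x^{ca})=\langle x,1\rangle$ and $\psi(x\,x^c)=\langle 1,x\rangle$, followed by conjugation inside $\mathrm{Stab}(1)$ and surjectivity of the first-coordinate projection, all check out. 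Granting normality, this correctly yields $K\times K\subseteq\psi(K)$, which is the only part of the proposition the present paper actually uses.

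The remaining two claims are, however, not proved, and one step is wrong. (i) \emph{Normality}: you verify closure under conjugation for $x$ and $x^c$ but explicitly leave $x^{cab}$, $x^{cac}$ (and also $x^{cad}$) open, deferring to unspecified ``deeper relations''; this is the actual content of the normality claim, since it requires genuine relations of $G_0$ such as $(ad)^4=1$, which must themselves be extracted from the recursion. (ii) \emph{Index, upper bound}: the relations you list ($a^2=b^2=c^2=d^2=bcd=1$, $[a,b]=1$, $b$ central) do \emph{not} force $|G_0/K|\le16$: killing $b$ leaves $\langle a,c\mid a^2,c^2\rangle$, the infinite dihedral group. You need in addition a relation bounding the order of $ad$ (or $ac$) modulo $K$, e.g.\ $(ad)^4=1$ in $G_0$, after which $G_0/K$ is a quotient of $D_4\times C_2$. (iii) \emph{Index, lower bound}: there is no truncated tree action of order $16$ killing $[a,b]$; the element $[a,b]$ already acts nontrivially on $X^2$, and the level-$n$ quotients of $G_0$ have orders $2,8,128,\dots$, never $16$. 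The lower bound must come from a different explicit order-$16$ quotient (or a coset enumeration), not from the tree action. Your closing suggestion to prove everything ``by a single induction along the levels'' is a plausible strategy but is stated without any inductive hypothesis or base case, so as written the proposal establishes only the branching inclusion conditional on normality.
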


%%%%%%%%%%%%%%%%%%%%%%%%%%%%%%%%%%%%%%%%%%%%%%%%%%%%%%%%%%%%%%%%
\section{A semi-algorithm for deciding the Engel property}\label{ss:algo}

We start by describing a semi-algorithm to check the Engel property. It
will sometimes not return any answer, but when it returns an answer then that
answer is guaranteed correct. It is guaranteed to terminate as long as the
contraction property of the automaton group $G$ is strong enough.

\begin{algorithm}\label{algo:1}
  Let $G$ be a contracting automaton group with alphabet
  $X=\{1,\dots,p\}$ for prime $p$, with the contraction property
  $\|g@j\|\le\eta\|g\|+C$.

  For $n\in p\N$ and $R\in\R$ consider the following finite graph
  $\Gamma_{n,R}$. Its vertex set is $B(R)^n\cup\{\texttt{fail}\}$,
  where $B(R)$ denotes the set of elements of $G$ of length at most
  $R$. Its edge set is defined as follows: consider a vertex
  $(g_1,\dots,g_n)$ in $\Gamma_{n,R}$, and compute
  \[(h_1,\dots,h_n)=(g_1^{-1}g_2,\dots,g_n^{-1}g_1).\]
  If $h_i$ fixes $X$ for all $i$, i.e.\ all $h_i$ have trivial image
  in $\sym(X)$, then for all $j\in\{1,\dots,p\}$ there is an edge from
  $(g_1,\dots,g_n)$ to $(h_1@j,\dots,h_n@j)$, or to \texttt{fail} if
  $(h_1@j,\dots,h_n@j)\not\in B(R)^n$. If some $h_i$ does not fix $X$,
  then there is an edge from $(g_1,\dots,g_n)$ to $(h_1,\dots,h_n)$,
  or to \texttt{fail} if $(h_1,\dots,h_n)\not\in B(R)^n$.
  \begin{description}
  \item[\boldmath Given $g,h\in G$ with $h^n=1$:] Set
    $t_0=(g,g^h,g^{h^2},\dots,g^{h^{n-1}})$. If there exists $R\in\N$
    such that no path in $\Gamma_{n,R}$ starting at $t_0$ reaches
    \texttt{fail}, then Engel($g,h$) holds if and only if the only
    cycle in $\Gamma_{n,R}$ reachable from passes through
    $(1,\dots,1)$.

    If the contraction coefficient satisfies $2^n\eta<1$, then it is
    sufficient to consider $R=(\|g\|+n\|h\|)2^nC/(1-2^n\eta)$.
  \item[\boldmath Given $n\in\N$:] The Engel property holds for all
    elements of exponent $n$ if and only if, for all $R\in\N$, the
    only cycle in $\Gamma_{n,R}$ passes through $(1,\dots,1)$.

    If the contraction coefficient satisfies $2^n\eta<1$, then it is
    sufficient to consider $R=2^nC/(1-2^n\eta)$.
  \item[\boldmath Given $G$ weakly branched and $n\in\N$:] If for some
    $R\in\N$ there exists a cycle in $\Gamma_{n,R}$ that passes
    through an element of $K^n\setminus1^n$, then no element of $G$
    whose order is a multiple of $n$ is Engel.

    If the contraction coefficient satisfies $2^n\eta<1$, then it is
    sufficient to consider $R=2^nC/(1-2^n\eta)$.
  \end{description}
\end{algorithm}

We consider the graphs $\Gamma_{n,R}$ as subgraphs of a graph
$\Gamma_{n,\infty}$ with vertex set $G^n$ and same edge definition as the
$\Gamma_{n,R}$.

We note first that, if $G$ satisfies the contraction condition
$2^\eta<1$, then all cycles of $\Gamma_{n,\infty}$ lie in fact in
$\Gamma_{n,2^nC/(1-2^n\eta)}$. Indeed, consider a cycle passing
through $(g_1,\dots,g_n)$ with $\max_i\|g_i\|=R$. Then the cycle
continues with $(g^{(1)}_1,\dots,g^{(1)}_n)$,
$(g^{(2)}_1,\dots,g^{(2)}_n)$, etc.\ with $\|g_i^{(k)}\|\le2^kR$; and
then for some $k\le n$ we have that all $g_i^{(k)}$ fix $X$; namely,
they have a trivial image in $\sym(X)$, and the map $g\mapsto g@j$ is
an injective homomorphism on them. Indeed, let
$\pi_1,\dots,\pi_n,\pi^{(i)}_1,\dots,\pi^{(i)}_n\in\Z_{/p}\subset\sym(X)$
be the images of $g_1,\dots,g_n,g^{(i)}_1,\dots,g^{(i)}_n$
respectively, and denote by $S\colon\Z_{/p}^n\to\Z_{/p}^n$ the cyclic
permutation operator. Then
$(\pi^{(n)}_1,\dots,\pi^{(n)}_n)=(S-1)^n(\pi_1,\dots,\pi_n)$, and
$(S-1)^n=\sum_j S^j\binom nj=0$ since $p|n$ and $S^n=1$. Thus there is
an edge from $(g^{(k)}_1,\dots,g^{(k)}_n)$ to
$(g^{(k+1)}_1@j,\dots,g^{(k+1)}_n@j)$ with
$\|g^{(k+1)}_i@j\|\le\eta\|g^{(k)}_i\|+C\le\eta2^nR+C$. Therefore, if
$R>2^nC/(1-2^n\eta)$ then $2^n\eta R+C<R$, and no cycle can return to
$(g_1,\dots,g_n)$.

Consider now an element $h\in G$ with $h^n=1$. For all $g\in G$, there
is an edge in $\Gamma_{n,\infty}$ from $(g,g^h,\dots,g^{h^{n-1}})$ to
$([g,h]@v,[g,h]^h@v,[g,h]^{h^{n-1}}@v)$ for some word
$v\in\{\varepsilon\}\sqcup X$, and therefore for all $c\in\N$ there
exists $d\le c$ such that, for all $v\in X^d$, there is a length-$c$
path from $(g,g^h,\dots,g^{h^{n-1}})$ to
$(E_c(g,h)@v,\dots,E_c(g,h)^{h^{n-1}}@v)$ in $\Gamma_{n,\infty}$.

We are ready to prove the first assertion: if Engel($g,h$), then
$E_c(g,h)=1$ for some $c$ large enough, so all paths of length $c$ starting
at $(g,g^h,\dots,g^{h^{n-1}})$ end at $(1,\dots,1)$. On the other hand, if
Engel($g,h$) does not hold, then all long enough paths starting at
$(g,g^h,\dots,g^{h^{n-1}})$ end at vertices in the finite graph
$\Gamma_{n,2^nC/(1-2^n\eta)}$ so must eventually reach cycles; and one of
these cycles is not $\{(1,\dots,1)\}$ since $E_c(g,h)\neq1$ for all $c$.

The second assertion immediately follows: if there exists $g\in G$
such that Engel($g,h$) does not hold, then again a non-trivial cycle
is reached starting from $(g,g^h,\dots,g^{h^{n-1}})$, and
independently of $g,h$ this cycle belongs to the graph
$\Gamma_{n,2^nC/(1-2^n\eta)}$.

For the third assertion, let
$\bar k=(k_1,\dots,k_n)\in K^n\setminus 1^n$ be a vertex of a cycle in
$\Gamma_{n,2^nC/(1-2^n\eta)}$. Consider an element $h\in G$ of order
$sn$ for some $s\in\N$. By the condition that $\#X=p$ is prime and the
image of $G$ in $\sym(X)$ is a cyclic group, $sn$ is a power of $p$,
so there exists an orbit $\{v_1,\dots,v_{sn}\}$ of $h$, so labeled
that $v_i^h=v_{i-1}$, indices being read modulo $sn$. For
$i=1,\dots,sn$ define
\[h_i=(h@v_1)^{-1}\cdots(h@v_i)^{-1},
\]
noting $h_i(h@v_i)=h_{i-1}$ for all $i=1,\dots,sn$ since
$h^{sn}=1$. Denote by `$i\%n$' the unique element of $\{1,\dots,n\}$
congruent to $i$ modulo $n$, and consider the element
\[g=\prod_{i=1}^{sn}\big(v_i*k_{i\%n}^{h_i}\big),
\]
which belongs to $G$ since $G$ is weakly branched. Let
$(k_1^{(1)},\dots,k_n^{(1)})$ be the next vertex on the cycle of
$\bar k$. We then have, using~\eqref{eq:twist},
\[[g,h]=g^{-1}g^h=\prod_{i=1}^{sn}\big(v_i*k_{i\%n}^{-h_i}\big)\prod_{i=1}^{sn}\big(v_{i-1}*k_{i\%n}^{h_i(h@v_i)}\big)=\prod_{i=1}^{sn}\big(v_i*(k_{i\%n}^{(1)})^{h_i}\big),\]
and more generally $E_c(g,h)$ and some of its states are read off the
cycle of $\bar k$. Since this cycle goes through non-trivial group
elements, $E_c(g,h)$ has a non-trivial state for all $c$, so is
non-trivial for all $c$, and Engel($g,h$) does not hold.

%%%%%%%%%%%%%%%%%%%%%%%%%%%%%%%%%%%%%%%%%%%%%%%%%%%%%%%%%%%%%%%%
\section{Proof of Theorem~\ref{thm:1}}\label{ss:proof}
The Grigorchuk group $G_0$ is contracting, with contraction
coefficient $\eta=1/2$. Therefore, the conditions of validity of
Algorithm~\ref{algo:1} are not satisfied by the Grigorchuk group, so
that it is not guaranteed that the algorithm will succeed, on a given
element $h\in G_0$, to prove that $h$ is not Engel. However, nothing
forbids us from running the algorithm with the hope that it
nevertheless terminates. It seems experimentally that the algorithm
always succeeds on elements of order $4$, and the argument proving the
third claim of Algorithm~\ref{algo:1} (repeated here for convenience)
suffices to complete the proof of Theorem~\ref{thm:1}.

Below is a self-contained proof of Theorem~\ref{thm:1}, extracting the
relevant properties of the previous section, and describing the computer
calculations as they were keyed in.

Consider first $h\in G_0$ with $h^2=1$. It follows from
Proposition~\ref{prop:torsion} that $h$ is Engel: given $g\in G_0$, we
have $E_{1+k}(g,h)=[g,h]^{(-2)^k}$ so $E_{1+k}(g,h)=1$ for $k$ larger
than the order of $[g,h]$.

For the other case, we start by a side calculation. In the Grigorchuk
group $G_0$, define $x=[a,b]$ and $K=\langle x\rangle^{G_0}$ as in
Proposition~\ref{prop:branch}, consider the quadruple
\[A_0=(A_{0,1},A_{0,2},A_{0,3},A_{0,4})=(x^{-2}x^{2ca},\,x^{-2ca}x^2x^{2cab},\,x^{-2cab}x^{-2},\,x^2)\]
of elements of $K$, and for all $n\ge0$ define
\[A_{n+1}=(A_{n,1}^{-1}A_{n,2},\,A_{n,2}^{-1}A_{n,3},\,A_{n,3}^{-1}A_{n,4},\,A_{n,4}^{-1}A_{n,1}).\]
\begin{lemma}\label{lem:calculation}
  For all $i=1,\dots,4$, the element $A_{9,i}$ fixes $111112$, is
  non-trivial, and satisfies $A_{9,i}@111112=A_{0,i}$.
\end{lemma}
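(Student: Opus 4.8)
The plan is to verify this lemma by explicit, finite computation in the Grigorchuk group $G_0$, guided by the structure of Algorithm~\ref{algo:1} with $n=4$. The quadruple $A_0$ is exactly the starting tuple $t_0=(g,g^h,g^{h^2},g^{h^3})$ (up to the reindexing used in the algorithm) for a suitable pair $g,h$ with $h^4=1$, and the recursion $A_{n+1,i}=A_{n,i}^{-1}A_{n,i+1}$ (indices mod $4$) is precisely the "commutator-and-descend" map $\Gamma_{4,\infty}$ applies when all the $h_i=A_{n,i}^{-1}A_{n,i+1}$ fix the first letter of $X$. So the first step is to confirm that at each level $n=0,\dots,8$ every entry $A_{n,i}$ lies in $K$ (guaranteed by Proposition~\ref{prop:branch}, since $A_0\in K^4$ and $K$ is a group), and that the relevant states can legitimately be taken at the letter $1$ throughout — i.e.\ that the word $111112$ arises by descending along the input letter $1$ five times and then reading off the action.

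Concretely, I would track, for each $n$, the image of each $A_{n,i}$ in $\sym(X)=\Z_{/2}$ and the state $A_{n,i}@1$. The key recursive fact needed is that taking states commutes with the group operations in the right way: $(uv)@1 = (u@1)(v@1^u)$, and when $u$ fixes $1$ this is just $(u@1)(v@1)$. The plan is to show that after forming the four new entries $A_{n+1,i}$, each fixes the letter $1$ (so that the "fix $X$" branch of the algorithm applies and we may pass to states at $1$), and that $A_{n+1,i}@1 = A_{n,i}'$ for the appropriate descended tuple. Iterating the descent six times (the length of the suffix after the leading $1$'s in $111112$, reading the last letter $2$ separately) should return the original $A_{0,i}$, which is the content of the claimed identity $A_{9,i}@111112=A_{0,i}$. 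The shift from index $9$ down to $0$ reflects that the cycle in $\Gamma_{4,R}$ has some period, and the period-plus-contraction bookkeeping is what produces the return after nine steps along a length-six word.

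The three assertions then split naturally. That $A_{9,i}@111112=A_{0,i}$ is the periodicity of the cycle: the map sending $A_n\mapsto A_{n+1}$, followed by descent at $1$, is eventually periodic on the finite graph $\Gamma_{4,R}$, and one simply checks that the orbit of $A_0$ closes up with the stated state identity. That $A_{9,i}$ \emph{fixes} $111112$ amounts to checking the intermediate permutation images are trivial along the path $11111$ and that the entry acts trivially on the final coordinate as required — a finite check on the $\sym(X)$-images of the descended entries. Non-triviality of $A_{9,i}$ follows from the state identity together with the non-triviality of $A_{0,i}$: since $A_{9,i}@111112=A_{0,i}\neq1$ and an element with a non-trivial state is itself non-trivial, we are done, provided we confirm $A_{0,i}\neq1$ directly (e.g.\ by evaluating each $A_{0,i}$ on a short word using the recursion~\eqref{eq:grigorchuk}).

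The main obstacle is bookkeeping rather than conceptual depth: the entries $A_{n,i}$ are words in the generators $a,b,c,d$ that can grow before the contraction kicks in, and one must carry the rewriting $x=[a,b]$, $x^c$, $x^{ca}$ together with the wreath-recursion $\psi$ faithfully through nine levels while correctly tracking which branch of the algorithm (fix-$X$ versus not) is taken at each step. This is exactly the sort of finite but error-prone verification the paper delegates to \textsc{Fr}/\textsc{Gap}; by hand the plan is to reduce the state bookkeeping to tracking only the images in $\sym(X)$ and the single state at letter $1$ at each level, using the simple identities $(g@v_1)@v_2=g@(v_1v_2)$ and the homomorphism property of $g\mapsto g@1$ on elements fixing $1$, so that the computation stays a manageable table of four columns and ten rows rather than a blow-up of generator words.
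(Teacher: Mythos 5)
Your overall strategy --- verify the lemma by a direct finite computation in $G_0$ --- is necessarily the same as the paper's, whose entire proof is an executed \textsc{Gap}/\textsc{Fr} session that builds $A_9$ from $A_0$ by the stated recursion and checks the three assertions on the resulting automata. The difference is that the paper actually performs the computation, whereas your proposal only describes a plan for one, and the plan's central shortcut does not work. The recursion $A_{n+1,i}=A_{n,i}^{-1}A_{n,i+1}$ takes place entirely at the top level of the tree; no states are taken in passing from $A_n$ to $A_{n+1}$. To evaluate $A_{9,i}@111112$ you therefore need the element $A_{9,i}$ itself (as a word or automaton), or you must justify interleaving the recursion with descents. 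The identity $(u^{-1}v)@j=(u@j)^{-1}(v@j)$ holds only when $u$ acts trivially on the first level, and whether that happens at a given step --- and at which letter a descent then occurs --- is precisely the data the computation has to produce; it cannot be assumed. Indeed, if the nine steps are to be read, as you intend, as a walk in $\Gamma_{4,\infty}$ descending along $111112$, the mismatch between $9$ steps and a length-$6$ word forces three of the nine steps to be non-descending (the entries do \emph{not} all fix $X$ there), so ``tracking only the $\sym(X)$-images and the single state at letter $1$'' is not a faithful reduction (note also that the final descent is at the letter $2$). Without carrying out this bookkeeping --- which is exactly what the \textsc{Fr} code does --- none of the three assertions is established.

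A smaller point of framing: $A_0$ is not the algorithm's starting tuple $t_0=(g,g^h,g^{h^2},g^{h^3})$; it plays the role of the cycle vertex $\bar k\in K^4\setminus 1^4$ in the third part of Algorithm~\ref{algo:1}, and was found (per the paper's closing remarks) among the states of $E_{23}(g,h)$ for $(g,h)=((ba)^4c,\,ad)$. Your observation that non-triviality of $A_{9,i}$ follows from $A_{9,i}@111112=A_{0,i}\neq1$, since an element with a non-trivial state is non-trivial, is correct and slightly cleaner than checking non-triviality independently --- but it still rests on the state identity and on $A_{0,i}\neq1$, both of which remain to be computed.
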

\begin{proof}
  This is proven purely by a computer calculation. It is performed as
  follows within \textsc{Gap}:
\begin{Verbatim}[commandchars=\\\{\}]
gap> \textit{LoadPackage("FR");}
true
gap> \textit{AssignGeneratorVariables(GrigorchukGroup);;}
gap> \textit{x2 := Comm(a,b)^2;; x2ca := x2^(c*a);; one := a^0;;}
gap> \textit{A0 := [x2^-1*x2ca,x2ca^-1*x2*x2ca^b,(x2ca^-1)^b*x2^-1,x2];;}
gap> \textit{v := [1,1,1,1,1,2];; A := A0;; }
gap> \textit{for n in [1..9] do A := List([1..4],i->A[i]^-1*A[1+i mod 4]); od;}
gap> \textit{ForAll([1..4],i->v^A[i]=v and A[i]<>one and State(A[i],v)=A0[i]);}
true
\end{Verbatim}
\end{proof}

Consider now $h\in G_0$ with $h^2\neq1$. Again by
Proposition~\ref{prop:torsion}, we have $h^{2^e}=1$ for some minimal
$e\in\N$, which is furthermore at least $2$. We keep the notation
`$a\%b$' for the unique number in $\{1,\dots,b\}$ that is congruent to
$a$ modulo $b$.

Let $n$ be large enough so that the action of $h$ on $X^n$ has an
orbit $\{v_1,v_2,\dots,v_{2^e}\}$ of length $2^e$, numbered so that
$v_{i+1}^h=v_i$ for all $i$, indices being read modulo $2^e$. For
$i=1,\dots,2^e$ define
\[h_i=(h@v_1)^{-1}\cdots(h@v_i)^{-1},
\]
noting $h_i(h@v_i)=h_{i-1\% 2^e}$ for all $i=1,\dots,2^e$ since $h^{2^e}=1$,
and consider the element
\[g=\prod_{i=1}^{2^e}\big(v_i*A_{0,i\%4}^{h_i}\big),
\]
which is well defined since $4|2^e$ and belongs to $G_0$ by
Proposition~\ref{prop:branch}. We then have, using~\eqref{eq:twist},
\[[g,h]=g^{-1}g^h=\prod_{i=1}^{2^e}\big(v_i*A_{0,i\%4}^{-h_i}\big)\prod_{i=1}^{2^e}\big(v_{i-1\% 2^e}*A_{0,i\%4}^{h_i(h@v_i)}\big)=\prod_{i=1}^{2^e}\big(v_i*A_{1,i}^{h_i}\big),\]
and more generally
\[E_c(g,h)=\prod_{i=1}^{2^e}\big(v_i*A_{c,i}^{h_i}\big).\]
Therefore, by Lemma~\ref{lem:calculation}, for every $k\ge0$ we have
$E_{9k}(g,h)@v_0(111112)^k=A_{0,1}\neq1$, so $E_c(g,h)\neq1$ for all
$c\in\N$ and we have proven that $h$ is not an Engel element.

%%%%%%%%%%%%%%%%%%%%%%%%%%%%%%%%%%%%%%%%%%%%%%%%%%%%%%%%%%%%%%%%
\section{Other examples}
Similar calculations apply to the Gupta-Sidki group $\Gamma$. This is
another example of infinite torsion group, acting on $X^*$ for
$X=\{1,2,3\}$ and generated by the states of the following automaton:
\[\begin{fsa}
  \node[state,inner sep=0pt,minimum size=7mm] (t) at (-3,-1) {$t$};
  \node[state,inner sep=0pt,minimum size=7mm] (T) at (3,1) {$t^{-1}$};
  \node[state,inner sep=0pt,minimum size=7mm] (a) at (-3,1) {$a$};
  \node[state,inner sep=0pt,minimum size=7mm] (A) at (3,-1) {$a^{-1}$};
  \node[state] (e) at (0,0) {$1$};
  \path (t) edge node {$(1,1)$} (a)
  edge [bend right=10] node[below] {$(2,2)$} (A)
  edge [loop left] node {$(3,3)$} (t)
  (T) edge node {$(1,1)$} (A)
  edge [bend right=10] node[above] {$(2,2)$} (a)
  edge [loop right] node {$(3,3)$} (T)
  (a) edge node [pos=0.2,right=5mm] {\small $(1,2),(2,3),(3,1)$} (e)
  (A) edge node [pos=0.2,left=5mm] {\small $(2,1),(3,2),(1,3)$} (e)
  (e) edge [loop,in=35,out=0,min distance=8mm,looseness=10] node[right] {$(*,*)$} (e);
\end{fsa}\]
The transformations $a,t$ may also be defined recursively by
\begin{equation}\label{eq:guptasidki}
\begin{alignedat}{3}
  (1v)^a&=2v, &\qquad (2v)^a&=3v, &\qquad (3v)^a&=1v,\\
  (1v)^t&=1v^a, &\qquad (2v)^t&=2v^{a^{-1}}, &\qquad (3v)^t&=3v^t.
\end{alignedat}
\end{equation}
The corresponding result is
\begin{theorem}\label{thm:gs}
  The only Engel element in the Gupta-Sidki group $\Gamma$ is the identity.
\end{theorem}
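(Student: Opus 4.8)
The plan is to follow the template established for $G_0$ in~\S\ref{ss:proof}, invoking the third assertion of Algorithm~\ref{algo:1}, but now with $n=3$, the relevant prime for $\Gamma$. Two structural facts drive the argument. First, $\Gamma$ is regular weakly branched: there is a nontrivial subgroup $K\le\Gamma$ with $\psi(K)\supseteq K^3$, so that $v*k\in\Gamma$ for every $k\in K$ and $v\in X^*$; one may take for $K$ the appropriate branch subgroup of $\Gamma$. Second, it is classical that $\Gamma$ is a $3$-group, so every nontrivial $h\in\Gamma$ has order a power of $3$, hence divisible by $3$. Consequently a \emph{single} self-reproducing $3$-cycle in $\Gamma_{3,R}$ passing through $K^3\setminus 1^3$ suffices: by the third assertion of Algorithm~\ref{algo:1} it shows that no element whose order is a multiple of $3$ is Engel, and therefore, together with the evident fact that the identity is Engel, that the identity is the \emph{only} Engel element. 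This is sharper than the Grigorchuk case precisely because there is no order-$2$ phenomenon at the prime $3$: the collapse $E_{1+k}(g,h)=[g,h]^{(-2)^k}$ rested on the involutive identity $[g,h]^h=[g,h]^{-1}$, which has no analog here, so even elements of order $3$ fail to be Engel.

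Concretely, I would search, within \textsc{Fr}, for a triple $A_0=(A_{0,1},A_{0,2},A_{0,3})\in K^3$, put
\[A_{n+1}=(A_{n,1}^{-1}A_{n,2},\,A_{n,2}^{-1}A_{n,3},\,A_{n,3}^{-1}A_{n,1})\qquad(n\ge0),\]
and seek an integer $m$ and a word $w\in X^*$ such that each $A_{m,i}$ fixes $w$, is nontrivial, and satisfies $A_{m,i}@w=A_{0,i}$. This is the Gupta--Sidki analog of Lemma~\ref{lem:calculation} and exhibits an explicit nontrivial cycle of the required type. Note that the contraction coefficient of $\Gamma$ need not meet the bound $2^3\eta<1$: as in the Grigorchuk proof, the cycle is verified directly rather than produced from the termination estimate, so a weak contraction is no obstruction.

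Granting such a triple, the non-Engelness of a given nontrivial $h$ follows verbatim as for $G_0$. Writing $h^{3^e}=1$ with $e\ge1$ minimal, choose $n$ large enough that $h$ has an orbit $\{v_1,\dots,v_{3^e}\}$ of length $3^e$ on $X^n$ with $v_{i+1}^h=v_i$, set $h_i=(h@v_1)^{-1}\cdots(h@v_i)^{-1}$, and form
\[g=\prod_{i=1}^{3^e}\big(v_i*A_{0,i\%3}^{h_i}\big)\in\Gamma,\]
which is well defined since $3|3^e$ and lies in $\Gamma$ because $\Gamma$ is weakly branched. Using~\eqref{eq:twist} one computes $E_c(g,h)=\prod_{i=1}^{3^e}\big(v_i*A_{c,i}^{h_i}\big)$, and the self-reproducing property yields $E_{mk}(g,h)@(v_0 w^k)=A_{0,1}\neq1$ for every $k\ge0$, where $v_0=v_{3^e}$. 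Hence $E_c(g,h)\neq1$ for all $c$, and $h$ is not Engel.

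The main obstacle is the computer search for the self-reproducing triple. \emph{Verifying} a proposed cycle is a routine finite computation, exactly as in Lemma~\ref{lem:calculation}; but \emph{locating} a nontrivial period-$3$ cycle in $\Gamma_{3,R}$ --- choosing the entries of $A_0$ inside $K$ together with the word $w$ so that the recursion closes up without collapsing to $1^3$ --- requires exploring the graph at a sufficiently large radius $R$, with no a priori guarantee from the (too weak) contraction that a short-period cycle exists. The remaining ingredients, namely that $\Gamma$ is weakly branched over an explicit $K$ and that $\Gamma$ is a $3$-group, are classical and quotable.
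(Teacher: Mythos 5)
Your plan coincides with the paper's own (sketched) proof: reduce to a self-reproducing triple in $K^3\setminus 1^3$ with $K=[\Gamma,\Gamma]$, then run the orbit construction of \S\ref{ss:proof} using that every nontrivial element of $\Gamma$ has order a power of $3$, hence divisible by $3$. The only ingredient you defer to a computer search is recorded explicitly in the paper, namely $A_0=([a^{-1},t],[a,t]^a,[t^{-1},a^{-1}])$ with $A_{4,i}@122=A_{0,i}$, so your verification step is exactly the paper's analogue of Lemma~\ref{lem:calculation}.
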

We only sketch the proof, since it follows that of Theorem~\ref{thm:1}
quite closely. Analogues of Propositions~\ref{prop:torsion}
and~\ref{prop:branch} hold, with $[\Gamma,\Gamma]$ in the r\^ole of
$K$. An analogue of Lemma~\ref{lem:calculation} holds with
$A_0=([a^{-1},t],[a,t]^a,[t^{-1},a^{-1}])$ and $A_{4,i}@122=A_{0,i}$.

%%%%%%%%%%%%%%%%%%%%%%%%%%%%%%%%%%%%%%%%%%%%%%%%%%%%%%%%%%%%%%%%
\section{Closing remarks}
It would be dishonest to withhold from the reader how I arrived at the
examples given for the Grigorchuk and Gupta-Sidki groups. I started by
small words $g,h$ in the generators of $G_0$, respectively $\Gamma$,
and computed $E_c(g,h)$ for the first few values of $c$. These
elements are represented, internally to \textsc{Fr}, as Mealy
automata. A natural measure of the complexity of a group element is
the size of the minimized automaton, which serves as a canonical
representation of the element.

For some choices of $g,h$ the size increases exponentially with $c$,
limiting the practicality of computer experiments. For others (such as
$(g,h)=((ba)^4c,ad)$ for the Grigorchuk group), the size increases
roughly linearly with $c$, making calculations possible for $c$ in the
hundreds. Using these data, I guessed the period $p$ of the recursion
($9$ in the case of the Grigorchuk group), and searched among the
states of $E_c(g,h)$ and $E_{c+p}(g,h)$ for common elements; in the
example, I found such common states for $c=23$. I then took the
smallest-size quadruple of states that appeared both in $E_c(g,h)$ and
$E_{c+p}(g,h)$ and belonged to $K$, and expressed the calculation
taking $E_c(g,h)$ to $E_{c+p}(g,h)$ in the form of
Lemma~\ref{lem:calculation}.

It was already shown by Bludov~\cite{bludov:engel} that the wreath
product $G_0^4\rtimes D_4$ is not Engel. He gave, in this manner, an
example of a torsion group in which a product of Engel elements is not
Engel. Our proof is a refinement of his argument.

A direct search for elements $A_{0,1},\dots,A_{0,4}$ would probably
not be successful, and has not yielded simpler elements than those
given before Lemma~\ref{lem:calculation}, if one restricts them to
belong to $K$; one can only wonder how Bludov found the quadruple
$(1,d,ca,ab)$, presumably without the help of a computer.

%%%%%%%%%%%%%%%%%%%%%%%%%%%%%%%%%%%%%%%%%%%%%%%%%%%%%%%%%%%%%%%%
\section*{Acknowledgments}
I am grateful to Anna Erschler for stimulating my interest in this question
and for having suggested a computer approach to the problem. %, and to Ines
%Klimann and Matthieu Picantin for helpful discussions that have improved the
%presentation of this note.

%%%%%%%%%%%%%%%%%%%%%%%%%%%%%%%%%%%%%%%%%%%%%%%%%%%%%%%%%%%%%%%%
\begin{bibdiv}
\begin{biblist}
\bib{abert:nonfree}{article}{
  author={Ab\'ert, Mikl\'os},
  title={Group laws and free subgroups in topological groups},
  journal={Bull. London Math. Soc.},
  volume={37},
  date={2005},
  number={4},
  pages={525\ndash 534},
  issn={0024-6093},
  review={\MR {2143732}},
  eprint={arXiv:math.GR/0306364},
}

\bib{akhavi-klimann-lombardy-mairesse-picantin:finiteness}{article}{
  author={Akhavi, Ali},
  author={Klimann, Ines},
  author={Lombardy, Sylvain},
  author={Mairesse, Jean},
  author={Picantin, Matthieu},
  title={On the finiteness problem for automaton (semi)groups},
  journal={Internat. J. Algebra Comput.},
  volume={22},
  date={2012},
  number={6},
  pages={1250052, 26},
  issn={0218-1967},
  review={\MR {2974106}},
  doi={10.1142/S021819671250052X},
}

\bib{bandman+:dynamics}{article}{
  author={Bandman, Tatiana},
  author={Grunewald, Fritz},
  author={Kunyavski{\u \i }, Boris},
  title={Geometry and arithmetic of verbal dynamical systems on simple groups},
  note={With an appendix by Nathan Jones},
  journal={Groups Geom. Dyn.},
  volume={4},
  date={2010},
  number={4},
  pages={607--655},
  issn={1661-7207},
  review={\MR {2727656 (2011k:14020)}},
  doi={10.4171/GGD/98},
}

\bib{bartholdi-g:parabolic}{article}{
  author={Bartholdi, Laurent},
  author={Grigorchuk, Rostislav I.},
  title={On parabolic subgroups and Hecke algebras of some fractal groups},
  journal={Serdica Math. J.},
  volume={28},
  date={2002},
  number={1},
  pages={47\ndash 90},
  issn={1310-6600},
  review={\MR {1899368 (2003c:20027)}},
  eprint={arXiv:math/9911206},
}

\bib{bartholdi-g-s:bg}{article}{
   author={Bartholdi, Laurent},
   author={Grigorchuk, Rostislav I.},
   author={{\v{S}}uni{\'k}, Zoran},
   title={Branch groups},
   conference={
      title={Handbook of algebra, Vol. 3},
   },
   book={
      publisher={North-Holland},
      place={Amsterdam},
   },
   date={2003},
   pages={989\ndash 1112},
   review={\MR{2035113 (2005f:20046)}},
   doi={10.1016/S1570-7954(03)80078-5},
   eprint={arXiv:math/0510294},
}

\bib{bludov:engel}{article}{
  author={Bludov, Vasily V.},
  title={An example of not Engel group generated by Engel elements},
  conference={ title={A Conference in Honor of Adalbert Bovdi's 70th Birthday, November 18\ndash 23, 2005}, },
  book={ place={Debrecen, Hungary}, },
  date={2005},
  pages={7\ndash 8},
}

\bib{erschler:iteratedidentities}{article}{
  author={Erschler, Anna~G.},
  title={Iterated identities and iterational depth of groups},
  year={2014},
  eprint={arxiv:math/1409.5953},
}

\bib{gap4:manual}{manual}{
  title={GAP --- Groups, Algorithms, and Programming, Version 4.4.10},
  label={GAP08},
  author={The GAP~Group},
  date={2008},
  url={\texttt {http://www.gap-system.org}},
}

\bib{gecseg:products}{book}{
  author={G{\'e}cseg, Ferenc},
  title={Products of automata},
  series={EATCS Monographs on Theoretical Computer Science},
  volume={7},
  publisher={Springer-Verlag},
  address={Berlin},
  year={1986},
  pages={viii+107},
  isbn={3-540-13719-X},
  review={\MR {88b:68139b}},
}

\bib{gecseg-c:ata}{book}{
  author={G{\'e}cseg, Ferenc},
  author={Cs{\'a}k{\'a}ny, B{\'e}la},
  title={Algebraic theory of automata},
  publisher={Akademiami Kiado},
  address={Budapest},
  date={1971},
}

\bib{gillibert:finiteness}{article}{
  author={Gillibert, Pierre},
  title={The finiteness problem for automaton semigroups is undecidable},
  journal={Internat. J. Algebra Comput.},
  volume={24},
  date={2014},
  number={1},
  pages={1--9},
  issn={0218-1967},
  review={\MR {3189662}},
  doi={10.1142/S0218196714500015},
}

\bib{godin-klimann-picantin:torsion-free}{article}{
  author={Godin, Thibault},
  author={Klimann, Ines},
  author={Picantin, Matthieu},
  title={On torsion-free semigroups generated by invertible reversible Mealy automata},
  conference={ title={Language and automata theory and applications}, },
  book={ series={Lecture Notes in Comput. Sci.}, volume={8977}, publisher={Springer, Cham}, },
  date={2015},
  pages={328--339},
  review={\MR {3344813}},
  doi={10.1007/978-3-319-15579-1\_25},
}

\bib{golod:burnside}{article}{
  author={Golod, Evgueni{\u \i }~S.},
  title={Some problems of Burnside type},
  language={Russian},
  conference={ title={Proc. Internat. Congr. Math.}, address={Moscow}, date={1966}, },
  book={ publisher={Izdat. ``Mir'', Moscow}, },
  date={1968},
  pages={284--289},
  review={\MR {0238880 (39 \#240)}},
}

\bib{grigorchuk:burnside}{article}{
  author={Grigorchuk, Rostislav~I.},
  title={On Burnside's problem on periodic groups},
  date={1980},
  issn={0374-1990},
  journal={{\cyreight Funktsional. Anal. i Prilozhen.}},
  volume={14},
  number={1},
  pages={53\ndash 54},
  note={English translation: {Functional Anal. Appl. \textbf {14} (1980), 41\ndash 43}},
  review={\MR {81m:20045}},
}

\bib{grigorchuk:growth}{article}{
  author={Grigorchuk, Rostislav~I.},
  title={On the Milnor problem of group growth},
  date={1983},
  issn={0002-3264},
  journal={Dokl. Akad. Nauk SSSR},
  volume={271},
  number={1},
  pages={30\ndash 33},
  review={\MR {85g:20042}},
}

\bib{gruenberg:engel}{article}{
  author={Gruenberg, Karl~W.},
  title={The Engel elements of a soluble group},
  journal={Illinois J. Math.},
  volume={3},
  date={1959},
  pages={151--168},
  issn={0019-2082},
  review={\MR {0104730 (21 \#3483)}},
}

\bib{jackson:undecidable}{article}{
   author={Jackson, Marcel},
   title={On locally finite varieties with undecidable equational theory},
   journal={Algebra Universalis},
   volume={47},
   date={2002},
   number={1},
   pages={1--6},
   issn={0002-5240},
   review={\MR{1901727 (2003b:08002)}},
   doi={10.1007/s00012-002-8169-0},
}

\bib{klimann-mairesse-picantin:computations}{article}{
  author={Klimann, Ines},
  author={Mairesse, Jean},
  author={Picantin, Matthieu},
  title={Implementing computations in automaton (semi)groups},
  conference={ title={Implementation and application of automata}, },
  book={ series={Lecture Notes in Comput. Sci.}, volume={7381}, publisher={Springer, Heidelberg}, },
  date={2012},
  pages={240--252},
  review={\MR {2993189}},
  doi={10.1007/978-3-642-31606-7\_21},
}

\bib{klimann:finiteness}{article}{
  author={Klimann, Ines},
  title={The finiteness of a group generated by a 2-letter invertible-reversible Mealy automaton is decidable},
  conference={ title={30th International Symposium on Theoretical Aspects of Computer Science}, },
  book={ series={LIPIcs. Leibniz Int. Proc. Inform.}, volume={20}, publisher={Schloss Dagstuhl. Leibniz-Zent. Inform., Wadern}, },
  date={2013},
  pages={502--513},
  review={\MR {3090008}},
}

\bib{leonov:identities}{article}{
  author={Leonov, Yuri{\u \i }~G.},
  title={On identities in groups of automorphisms of trees},
  date={1997},
  journal={Visnyk of Kyiv State University of T.G.Shevchenko},
  number={3},
  pages={37\ndash 44},
}

\bib{medvedev:engel}{article}{
  author={Medvedev, Yuri},
  title={On compact Engel groups},
  journal={Israel J. Math.},
  volume={135},
  date={2003},
  pages={147--156},
  issn={0021-2172},
  review={\MR {1997040 (2004f:20072)}},
  doi={10.1007/BF02776054},
}
\end{biblist}
\end{bibdiv}

\end{document}